\documentclass{ifacconf}

\usepackage{graphicx}      
\usepackage{natbib}        
\usepackage{amsmath}
\usepackage{amsfonts}
\usepackage{amsthm}
\usepackage[T1]{fontenc}
\newtheorem{theorem}{Theorem}
\newtheorem{lemma}{Lemma}

\begin{document}
\begin{frontmatter}

\title{Fast Verification of Control Barrier Functions via Linear Programming } 

\thanks[footnoteinfo]{
This work was supported by AFOSR under grant FA9550-19-1-0169,
ONR under grants
N00014-19-1-2543,
N00014-21-2-2502,
and
N00014-22-1-2435,
and AFRL under grants
FA8651-22-F-1052
and
FA8651-23-FA-006. 
}

\author[First]{Ellie Pond} 
\author[Second]{Matthew Hale} 

\address[First]{University of Florida, Gainesville, FL 32611 USA (email: elliempond@ufl.edu).}

\address[Second]{University of Florida, Gainesville, FL 32611 USA (email: matthewhale@ufl.edu).} 

\begin{abstract}                
Control barrier functions are a popular method of ensuring system safety, and 
these functions can be used to 
enforce invariance of a set under the dynamics of a system. 
A control barrier function must have certain properties, and one
must both formulate a candidate control barrier function and verify that
it does indeed satisfy the required properties. Targeting the latter problem,
this paper presents a method of verifying any finite number of candidate control barrier functions with linear programming.  We first apply techniques from real algebraic geometry to formulate verification problem statements that are solvable
numerically. 
Typically, semidefinite programming is used to verify candidate control barrier functions, but this does not always scale well. 
Therefore, we next apply a method of inner-approximating the set of sums of squares polynomials that significantly reduces the computational complexity of these verification problems by transcribing them to linear programs.
We give explicit forms for the resulting linear programs, and
simulation results for a satellite inspection problem 
show that the computation time needed for verification
can be reduced by more than~$95$\%. 
\end{abstract}

\begin{keyword}
Nonlinear control, Control of constrained systems, Autonomous robotic systems
\end{keyword}

\end{frontmatter}

\section{Introduction}
Interest in safety-critical autonomous systems has grown in recent years. 
For example, mobile autonomous robots must avoid collisions with each other \cite{Zohaib_RobotObstacleAvoidanceSurvey,Kunchev_ObstacleAvoidance}
and co-robots working alongside humans must not collide with them \cite{Zacharaki_SafetyHumanRobotSurvey,Vasic_HumanRobotInteraction}. Model predictive control \cite{Gao_MPCObstacleAvoidance}, reachability analysis \cite{Althoff_Reachability}, and ``back-up" control strategies \cite{Roumeliotis_SecondaryControl} are among the methods that address safety. However, these approaches may suffer from poor performance or from being overly conservative.
One alternative way to formalize system safety is by defining a safe region of a system's
state space and ensuring that the system's dynamics render that set forward-invariant in time. 

A common method of ensuring this type of safety through invariance is 
with control barrier functions (CBFs).
A representative sample of works using
CBFs for safety is \cite{Xiao_HOCBFS,Jankovic_CollisionAvoidance,Landi_HumanRobotInteraction,Xu_RobustnessOfCBFs,Cortez_RobustMultiCBFFramework}. 
A CBF is a function that is positive on the safe set and whose derivative
satisfies a certain bound on the safe set. 
A CBF is then used in 
controller design to impose 
inequality constraints on the system input that
ensure invariance of the safe set. 
These inequality constraints are in terms of the control barrier function itself, which
must be found and verified to satisfy the conditions that make it a valid CBF before it can be
used in the control design process. 

The need for verification has led to the development
of several computational approaches that are analogous
to the verification of Lyapunov
functions. Recent work in this vein has used
semidefinite programming and 
sums of squares (SOS) optimization
to validate that a candidate CBF is indeed a valid
CBF \cite{Wang_PermissiveBarrierCertificates,Isaly,Clark_Verification}. 
However, these verification approaches may not scale well
to systems with large state spaces, e.g., bipedal robot control \cite{Feng_AtlasRobot},
or systems with many constraints CBFs \cite{Wang_SafetyBarrierCertificates}.

To address this need, in this paper we develop scalable verification techniques for systems with single and multiple CBFs.
Our approach leverages tools from real algebraic
geometry to 
formulate computational problems to assess 
the nonnegativity properties of a candidate CBF. 
We then apply the method introduced in \cite{Ahmadi_DSOSAndSDSOSOptimization} to translate these verification statements to linear programs (LPs).

To summarize, our contributions are:
\begin{itemize}
\item Derivation of CBF verification statements for dynamical systems with any finite number of candidate CBFs
\item Explicit mapping of the verification statements to linear programs
\item Simulations that demonstrate the substantial improvement in scalability and
reduction of computation time in practice.
\end{itemize}

There are several works that propose verification methods for systems with multiple candidate CBFs. In \cite{Cortez_RobustMultiCBFFramework}, a neighborhood notion is employed to switch priority of  CBFs to circumvent possible conflicts. In \cite{Black_AdaptationForValidationOfConsolidatedCBF}, a method of consolidating multiple CBFs to one candidate CBF through under-approximating the intersection of safe sets is proposed. The most relevant work can be found in \cite{Wang_PermissiveBarrierCertificates,Isaly,Clark_Verification}, where algebraic geometry tools are applied to create SOS validation programs. However, we apply different results from algebraic geometry and develop linear program translations that allow for improved scaling and decreased computation times relative to SOS-based works. To the best of our knowledge, CBF verification has not been addressed through linear programming before.

The rest of this paper is organized as follows. In Section 2, we provide necessary background on CBFs, Positivstellens\"atze from algebraic geometry, and DSOS programs, as well as the problem statements solved in this work. In Section 3, we develop verification statements for single and multiple candidate CBFs through application of algebraic geometry. In Sections 4 and 5, we explicitly state the translation from verification statement to linear programs for single and multiple CBF problems, respectively. In Section 6, we present simulation results that demonstrate the benefit of linear programming over semidefinite programming for verification. In Section 7, we conclude.

\section{Background and Problem Formulation}
In this section, we introduce background on CBFs, Positivstellens\"atze, and DSOS programs. We then provide three formal problem statements that will be the focus of the remaining sections.

Throughout, for indexing up to an integer $N$ we use $i\in[N]=\{1,\dots,N\}$. For functions $f$ and $g$, we define the Lie derivative of $f$ with respect to $g$ as $L_gf(x)=\frac{\partial f(x)}{\partial x}g(x)$. We use $\mathbb{R}[x]$ to denote the set of scalar-valued polynomials with real coefficients. Here, $x=(x_1,\dots,x_n)^T$ is a vector of indeterminates whose dimension will be clear from context. We define a polynomial matrix $p\in\mathbb{R}^{m\times n}[x]$ as an $m \times n$ matrix with polynomial entries $p_{ij}\in\mathbb{R}[x]$ for all $i\in[m]$, $j\in[n]$. We write $\mathbb{R}^{m}[x]$ for $\mathbb{R}^{m\times 1}[x]$. We define an operation $\text{col}:\mathbb{R}^{m\times n}\rightarrow \mathbb{R}^{mn}$ as the mapping from a matrix to a column vector containing the transposed rows of the matrix. For example, for a matrix 
\begin{equation}
    p=\begin{bmatrix} p_{11} & p_{12} &\hdots & p_{1n} \\ p_{21} & p_{22} & \hdots & p_{2n} \\ \vdots & & \ddots& \vdots \\ p_{m1} & p_{m2} & \hdots &p_{mn} \end{bmatrix}
\end{equation}
with $p_{ij}\in\mathbb{R}[x]$ for all $i\in[m]$, $j\in [n]$, we have 
\begin{equation}
         \text{col}(p)= \begin{bmatrix}p_{11} & p_{12} & \dots & p_{1n} & \dots & p_{m1} & p_{m2} & \dots & p_{mn}\end{bmatrix}^T.   \\
\end{equation}

\subsection{Control Barrier Functions}

We will consider dynamical systems that are affine in the control and take the form
\begin{equation}\label{sysdef}
    \dot{x}=f(x)+g(x)u,
\end{equation}
where $x\in\mathbb{R}^n$ is the state of the system, $u\in  \mathbb{R}^m$ is the input, and $f\in\mathbb{R}^n[x]$, $g\in\mathbb{R}^{n\times m}[x]$ define the system dynamics.

The polynomial $b\in \mathbb{R}[x]$ defines a safe set $\mathcal{B}=\{x\in\mathbb{R}^n \mid b(x)\geq 0\}$.  The function $b$ is a CBF if there exists an extended class-$\mathcal{K}$ function, $\kappa$, such that \cite{Ames_TheoryApplications}
\begin{equation}
\begin{aligned}
    \sup_{u} \Big\{ L_fb(x) + L_gb(x)u + \kappa(b(x)) \Big\}&\geq 0, \\ &\text{for all } x\in \mathbb{R}^n.
\end{aligned}
\end{equation}
The Lie derivatives of $b$ are $L_fb\in \mathbb{R}[x]$ and $L_gb\in \mathbb{R}^m[x]$. CBFs can be used to impose an inequality constraint on the input that takes the form
\begin{equation}\label{CBFineq}
    L_gb(x)u\geq \kappa(b(x)) - L_fb(x).
\end{equation}

Control barrier functions are used to ensure safety for a system. In this capacity, safety is equivalent to forward invariance of the system's safe set $\mathcal{B}$. The set $\mathcal{B}$ is forward invariant if for every $x_0\in\mathcal{B}$, we have $x(t)\in\mathcal{B}$ for all $t\geq 0$. If a control input exists that satisfies (\ref{CBFineq}), then the safe set $\mathcal{B}$ can be rendered forward invariant and the system is rendered safe  \cite{Ames_TheoryApplications}.

In this work we focus on systems with unbounded control inputs. A candidate CBF $b$ satisfies (\ref{CBFineq}) if and only if there are no points $x\in\mathbb{R}^n$ such that $L_fb(x)<0$ while $b(x)=0$ and $L_gb(x)=0$ \cite{Clark_Verification}. When $b(x)=0$, a control input can be chosen so that (\ref{CBFineq}) is satisfied. However, if both $b(x)=0$ and $L_gb(x)=0$, then the dynamics and candidate CBF must have $L_fb(x)\geq0$.

\subsection{Positivstellens\"{a}tze}
We consider a finite set of polynomials
\begin{equation}
    S = \{f_1,\dots,f_r;g_1,\dots,g_s\},
\end{equation}
where $f_i,g_j\in \mathbb{R}[x]$ for all $i\in[r]$ and $j\in[s]$. The basic semialgebraic set $K$ associated to $S$ is 
\begin{equation}\label{Kdef}
\begin{aligned}
    K= \{x\in \mathbb{R}^n \; | \; f_i(x)&\geq0, \;  i = 1,\dots,r; \;\; \\ &g_j(x)=0, \; j=1,\dots,s\}.
\end{aligned}
\end{equation}
We denote by $\Sigma[x]$ the set of all sums of squares (SOS) polynomials in $\mathbb{R}[x]$. All $s\in\Sigma[x]$ take the form $s(x) = \sum_i p_i^2(x)$ for some $p_i\in\mathbb{R}[x]$. 

Preorders and quadratic modules are algebraic objects associated to the set of polynomials $S$. Let the subset $P_S\subseteq\mathbb{R}[x]$ be the smallest preorder containing the set $S$. Then a polynomial $q\in P_S$ if and only if $q$ is of the form
\begin{equation}\label{PSdef}
    q = \sum_{\epsilon\subseteq\{0,1\}^r} s_\epsilon (x) f_\epsilon(x) + \sum_{j=1}^s p_j(x)g_j(x),
\end{equation}
where $p_j\in\mathbb{R}[x]$ for $j\in[s]$ and each $s_\epsilon \in\Sigma[x]$, with $f_\epsilon = f_1^{\epsilon_1}\cdot\dots \cdot f_r^{\epsilon_r}$ and $f_{\emptyset} = 1$ \cite{Powers_CertificatesOfPositivity}. 

Let the subset $M_S\subseteq\mathbb{R}[x]$ be the smallest quadratic module containing the set $S$. Then a polynomial $q\in M_S$ if and only if $q$ is of the form
\begin{equation}\label{MSdef}
    q = s_0(x) + \sum_{i=1}^r s_i(x)f_i(x) + \sum_{j=1}^s p_j(x)g_j(x),
\end{equation}
where $p_j\in \mathbb{R}[x]$ for $j\in[s]$ and $s_i\in\Sigma[x]$ for $i\in[r]\cup\{0\}$ \cite{Powers_CertificatesOfPositivity}. 

\textit{Remark 1 (\cite{Powers_AlgorithmForSOS})}. The quadratic module of $S$ is always contained in the preorder, i.e., $M_S\subseteq P_S$. When $r\leq 1$ in $K$, we have $M_S=P_S$. 

\textit{Proposition 1 (Archimedean Property \cite{Powers_CertificatesOfPositivity})}. Fix a polynomial ring $\mathbb{R}[x]$ over the indeterminates $x=(x_1,\dots,x_n)^T$, and fix a set of polynomials $S = \{f_1,\dots,f_r;g_1,\dots,g_s\}$. Then, for a quadratic module $M_S$, the following conditions are equivalent:
\begin{enumerate}
    \item $M_S$ is Archimedean. 
    \item There exists $C\in\mathbb{N}\backslash \{0\}$ such that $C- \sum_{i=1}^nx_i^2\in M_S$.
    \item There exists $C\in\mathbb{N}\backslash \{0\}$ such that both $C\pm x_i\in M_S$ for $i=1,\dots,n$.
\end{enumerate}
The Archimedean property is slightly stronger than $K$ being compact \cite{Ahmadi_DSOSAndSDSOSOptimization}. 

The Positivstellens\"{a}tze are a class of theorems in real algebraic geometry that characterize positivity properties of polynomials over semialgebraic sets. The broadest variation is credited to Krivine and Stengle.

\textit{Proposition 2 (KS Positivstellensatz \cite{Bochnak_RealAlgebraicGeometry})}. Fix a set $S = \{f_1,\dots,f_r;g_1,\dots,g_s\}$. Let $K$ and $P_S$ be as defined in (\ref{Kdef}) and (\ref{PSdef}), and let $q\in\mathbb{R}[x]$. Then
\begin{enumerate}
    \item $q>0$ on $K$ if and only if there exist $h_1,h_2\in P_S$ such that $h_1q=1+h_2$.
    \item $q\geq 0$ on $K$ if and only if there exist an integer $m\geq 0$ and $h_1,h_2\in P_S$ such that $h_1q=q^{2m}+h_2$.
    \item $q=0$ on $K$ if and only if there exists an integer $m\geq 0$ such that $-q^{2m}\in P_S$.
    \item $K=\emptyset$ if and only if $-1\in P_S$.
\end{enumerate}
These four statements are equivalent. 

Another variation of the Positivstellensatz is credited to Putinar. This version makes use of the quadratic module through introducing an assumption. Using the quadratic module in computations is advantageous to using the preorder, due to the linear growth rate of the number of SOS terms in expressions in the quadratic module compared to the exponential growth rate of the number of SOS terms when using expressions in the preorder. 

\textit{Proposition 3 (Putinar's Positivstellensatz \cite{Powers_CertificatesOfPositivity})}. Fix a set $S = \{f_1,\dots,f_r;g_1,\dots,g_s\}$. Let $K$ and $P_S$ be as defined in (\ref{Kdef}) and (\ref{PSdef}), and let $q\in\mathbb{R}[x]$. Assume that $M_S$ is Archimedean, satisfying Proposition 1. Then $q>0$ on $K$ implies that $q\in M_S$.

Lastly, a corollary credited to Jacobi is introduced. We note that a quadratic module in $\mathbb{R}[x]$ can alternatively be defined as a $\Sigma[x]$-module, where $\Sigma[x]$ is a generating preprime in $\mathbb{R}[x]$, i.e., $\Sigma[x]-\Sigma[x]=\mathbb{R}[x]$. We present the following proposition in terms of the polynomial ring $\mathbb{R}[x]$ because that is what we will consider going forward, though it generally applies to any commutative ring.

\textit{Proposition 4 (Jacobi's Representation Corollary \cite{Jacobi_RepresentationTheorem})}. Suppose $T$ is a generating preprime in a commutative ring $\mathbb{R}[x]$, $M_S$ is an Archimedean $T$-module, and $K$ is the nonnegativity set defined by (\ref{Kdef}). Then $-1\in M_S$ if and only if $K = \emptyset$.

\subsection{DSOS Programming}
Traditionally, semidefinite programs (SDPs) have been used to solve problems originating from Positivstellens\"{a}tze or polynomial optimization problems with nonnegativity constraints. Through this method, nonnegativity is approximated by a more restrictive but computationally tractable SOS requirement. 

We denote by $\mathcal{P}[x]$ the set of all nonnegative polynomials in $\mathbb{R}[x]$. For indeterminates $x\in\mathbb{R}^n$, let $m(x)\in\mathbb{R}^k$ be a monomial basis vector of degree $d$, where $k=\binom{n+d}{n}$. Let $s\in\mathbb{R}[x]$ be of degree $2d$. Then $s$ is an SOS polynomial if and only if there exists a symmetric Gram matrix, $Q$, such that \cite{Powers_AlgorithmForSOS}
\begin{equation}
    s(x) = m^T(x)Qm(x) \quad \text{and}
\end{equation}
\begin{equation}
    Q\succeq 0.
\end{equation}
Searching for a $Q$ that is positive semidefinite (PSD) is conventionally done through semidefinite programming. 

A matrix $Q\in\mathbb{R}^{k\times k}$ is diagonally dominant if 
\begin{equation}\label{DD}
    Q_{ii}\geq \sum_{j\in[k]\backslash\{i\}} | Q_{ij}| \quad\quad \text{for all } i\in[k].
\end{equation}
We define `$\succeq_{dd}$' to indicate diagonal dominance of a matrix. That is, $Q\succeq_{dd}0$ denotes that $Q$ satisfies (\ref{DD}) and thus
is positive semidefinite. 
A polynomial $s\in\mathbb{R}[x]$ is a diagonally dominant sums-of-squares (DSOS) polynomial if it can be written as 
\begin{equation}
\begin{aligned}
    s(x) = \sum_{i} a_im_i^2(x) + \sum_{i,j}&b_{ij}(m_i(x) + m_j(x))^2  \\ + &\sum_{i,j}c_{ij}(m_i(x)-m_j(x))^2
\end{aligned}
\end{equation}
for some monomials $m_i(x),m_j(x)$ and some nonnegative scalars $a_i,b_{ij},c_{ij}$.

\textit{Proposition 5 (\cite{Ahmadi_DSOSAndSDSOSOptimization})}. Let $s\in\mathbb{R}[x]$ be of degree $2d$. Then $s$ is a diagonally dominant sum of squares (DSOS) polynomial if and only if there exists a symmetric Gram matrix, $Q$, such that 
\begin{equation}
    s(x) = m^T(x)Qm(x) \quad \text{and}
\end{equation}
\begin{equation}
    Q\succeq_{dd}0.
\end{equation}

Computationally, the condition $Q\succeq_{dd}0$ is enforced using the linear constraints
\begin{equation}\label{DDconst}
    \begin{aligned}
    -Q_{ii} + \sum_{j\in[k]\backslash\{i\}}\tau_{ij} &\leq  0 \quad \quad \text{for all } i \in [k]\\
    Q_{ij} - \tau_{ij} &\leq 0 \quad \quad \text{for all } i,j\in [k],\; i\not=j\\
    -Q_{ij} - \tau_{ij} &\leq 0 \quad \quad \text{for all } i,j\in [k],\; i\not=j,\\
    \end{aligned}
\end{equation}
where $\tau\in\mathbb{R}^{k\times k}$ is a symmetric bounding matrix that enforces the absolute value constraints. We note that $\tau_{ij}\geq0$ for $i,j\in[k]$ is implicitly enforced by (\ref{DDconst}). We denote by $\mathcal{D}[x]$ the set of all DSOS polynomials in $\mathbb{R}[x]$. The inclusion relationship between the sets of polynomials is $\mathcal{D}[x]\subseteq \Sigma[x]\subseteq \mathcal{P}[x]\subseteq \mathbb{R}[x]$.

\textit{Proposition 6 (\cite{Ahmadi_DSOSAndSDSOSOptimization})}. For a fixed degree $d$, optimization over $\mathcal{D}[x]$ can be done through a linear program (LP).

In such an LP, the polynomial variables take the form $p(x)=c^Tm(x)$. 
We define the operation $\text{uTri}:\mathbb{R}^{k\times k}\rightarrow \mathbb{R}^{\frac{1}{2}(k^2+k)}$ as the mapping from a symmetric matrix to a vector of the upper trianglular elements of the matrix. For example, for $k=3$ the matrix
\begin{equation}
    Q = \begin{bmatrix} q_{11} & q_{12} & q_{13} \\ q_{12} & q_{22} & q_{23} \\ q_{13} & q_{23} & q_{33} \end{bmatrix}
\end{equation}
gives
\begin{equation}
    \text{uTri}(Q) = \begin{bmatrix} q_{11} & q_{12} & q_{13} & q_{22} & q_{23} & q_{33}  \end{bmatrix}^T.
\end{equation}
In DSOS programs with $I$ DSOS polynomial variables of the form $m^T(x)Q_im(x)$ for $i\in[I]$ and $J$ polynomial variables of the form $c_j^Tm(x)$ for $j\in [J]$, the decision vector is
\begin{equation}
\begin{aligned}
    z = [c_1^T \;\; \hdots \;\; c_J^T \;\; \text{uTri}&(Q_1)^T \;\; \text{uTri}(\tau_1)^T \;\; \hdots \;\; \\ &\text{uTri}(Q_I)^T  \;\;  \text{uTri}(\tau_I)^T]^T
\end{aligned}
\end{equation}
with size $z\in\mathbb{R}^{I(k^2+k)+Jk}$.

Lastly, we define the operation $\text{G}_k:\mathbb{R}[x]\rightarrow\mathbb{R}^{k\times k}$ as the mapping from a polynomial expression to the corresponding Gram matrix, where $k$ is the degree of the monomial basis. For any polynomial $e(x)$, there exists a Gram matrix representation of the form $e(x)=m^T(x)G_k(e)m(x)$, where $G_k(e)$ is composed of the monomial coefficients of $e$. For example, consider polynomials $p(x) = c_1 + c_2x + c_3 x^2$ and $b(x) = x^2-4$, with indeterminate $x\in\mathbb{R}$ and monomial degree $k=4$. Then
\begin{equation}\label{gram}
    \text{G}_4(pb) = \begin{bmatrix}-4c_1 & -4c_2 & c_1-4c_3 & c_2 & c_3\\ 0 & 0 & 0 & 0 & 0\\ 0 & 0 & 0 &0 & 0 \\ 0 & 0 & 0 &0 &0  \end{bmatrix}.
\end{equation}
The output of the Gram operation satisfies $p(x)b(x) = m^T(x)G_4(pb)m(x)$ with $m(x) = [1 \;\; x\;\; x^2 \;\; x^3 \;\; x^4]^T$. This operation removes dependence on the indeterminate $x\in\mathbb{R}$ and enables the development of conditions that apply for all $x\in\mathbb{R}^n$, e.g., positive definiteness. This property will be utilized in the linear programs we present in Sections 4 and 5.

\subsection{Problem Formulation}

In this section, we state the problems we solve in the remainder of the paper.

\subsubsection{Problem 1.} Given a collection of candidate CBFs, $S=\{b_1,\dots,b_L\}$ with $b_i\in\mathbb{R}[x]$ for $i\in[L]$, determine necessary and sufficient conditions for safety using the Positivstellens\"{a}tze.

For this problem, we will clearly establish the connection between real algebraic geometry and candidate CBF verification. We will make explicit the motivation behind the theorems we apply and the necessity of the assumptions placed. In Section 3, we begin by considering the verification of a single candidate CBF. This is then extended to the verification of multiple candidate CBFs.

\subsubsection{Problem 2.} Given a system (\ref{sysdef}) and a candidate CBF, $b\in\mathbb{R}[x]$, verify system safety through forward invariance of the safe set $\mathcal{B}=\{x\in\mathbb{R}^n\mid b(x)\geq 0\}$ with a linear program. 

\subsubsection{Problem 3.} Given a system (\ref{sysdef}) and a collection of candidate CBFs, $S=\{b_1,\dots,b_L\}$ with $b_i\in\mathbb{R}[x]$ for $i\in[L]$, verify system safety through forward invariance to the system's safe set $\mathcal{B}=\{x\in\mathbb{R}^n\mid b_1(x)\geq 0,\dots,b_L(x)\geq 0\}$ with linear programming.

In Sections 4 and 5, we show the translation from the verification statements derived in Section 3 to linear programs. While these verification statements are nonlinear in the indeterminate variable $x\in\mathbb{R}^n$, they form optimization programs that are linear over their decision variables.

\section{Safety Verification with Positivstellens\"atze}

In this section we solve Problem 1. We begin by considering systems with a single candidate CBF. Several notions from algebraic geometry can be applied to determine nonnegativity properties of a candidate CBF, and while these variations can all theoretically verify candidate functions, their numerical implementation can vary significantly in the size of the corresponding problems and in the ease of computation of their solutions. In determining the nonegativity theorems we apply to create verification statements, we keep in mind the end goal of scaling to large numbers of CBFs and translating to implementable optimization problems. This leads us to the following.

\begin{theorem}
Given a system in (\ref{sysdef}) and a single candidate CBF, $b\in\mathbb{R}[x]$, let $S=\{b,L_gb\}$ and $K=\{x\in\mathbb{R}^n \mid b(x)=0,L_gb(x)=0\}$. Then $b$ is a CBF and the system is verified forward invariant to $\mathcal{B}$ if and only if there exists $a\in\mathbb{N}_{\geq0}$, SOS polynomials $s_1,s_2\in\Sigma[x]$, and polynomials $p_{10},p_{20}\in\mathbb{R}[x]$ and $p_1,p_2\in\mathbb{R}^m[x]$ such that
\begin{equation}\label{Theorem1}
    h_1(x)L_fb(x) - h_2(x) - (L_fb(x))^{2a}=0
\end{equation}
and such that $h_1,h_2\in M_S$ are of the form
\begin{equation}\label{pivec}
\begin{aligned}
    h_i(x) = s_i(x) + b(x)p_{i0}(x) + L_gb(x)&p_{i}(x), \quad  
    \\& \text{for } i=1,2,
\end{aligned}
\end{equation}
as defined in (\ref{MSdef}).
\end{theorem}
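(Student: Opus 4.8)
The plan is to chain together three equivalences: a control-theoretic reduction of CBF validity to a polynomial nonnegativity statement, the Krivine--Stengle Positivstellensatz (Proposition 2), and the collapse of the preorder onto the quadratic module recorded in Remark 1. Since each link is itself an ``if and only if,'' their composition yields the stated biconditional.

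First I would invoke the characterization recalled in Section 2 for systems with unbounded inputs: $b$ satisfies the CBF inequality (\ref{CBFineq}), and hence renders $\mathcal{B}$ forward invariant, if and only if there is no $x\in\mathbb{R}^n$ with $L_fb(x)<0$ while simultaneously $b(x)=0$ and $L_gb(x)=0$. Equivalently, $L_fb(x)\geq 0$ for every $x$ in the set $K=\{x\in\mathbb{R}^n\mid b(x)=0,\ L_gb(x)=0\}$. This converts the verification question into a nonnegativity statement for $L_fb$ over the basic semialgebraic set $K$.

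Next I would apply part 2 of Proposition 2 with $q=L_fb$, identifying the Positivstellensatz data $S=\{f_1,\dots,f_r;g_1,\dots,g_s\}$ with the present setting. Here $K$ carries no inequality generators, so $r=0$, while its equality generators are $b$ together with the $m$ scalar components of $L_gb$, so $s=m+1$. The Positivstellensatz then gives that $L_fb\geq 0$ on $K$ if and only if there exist an integer $a\geq 0$ and $h_1,h_2\in P_S$ with $h_1L_fb=(L_fb)^{2a}+h_2$; rearranging is exactly the identity (\ref{Theorem1}), and the exponent $a$ plays the role of the integer called $m$ in Proposition 2 (renamed to avoid the clash with the input dimension).

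The remaining and most delicate step is to pass from membership in $P_S$ to the explicit quadratic-module form (\ref{pivec}). Because $K$ is cut out purely by equalities we have $r=0\leq 1$, so Remark 1 gives $M_S=P_S$ and the $h_i$ may be taken in $M_S$ without loss. Unwinding the definition (\ref{MSdef}) with $r=0$ leaves a single SOS term plus the equality contributions; writing these out for the generators $b, L_{g_1}b,\dots,L_{g_m}b$ produces $h_i = s_i + b\,p_{i0} + L_gb\cdot p_i$ with $s_i\in\Sigma[x]$, $p_{i0}\in\mathbb{R}[x]$, and $p_i\in\mathbb{R}^m[x]$, which is (\ref{pivec}) once the inner product $L_gb\cdot p_i=\sum_{j=1}^m (L_{g_j}b)(p_i)_j$ is recognized as gathering the $m$ scalar equality terms. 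I expect the only real obstacle to be this bookkeeping: one must carefully track that the single vector constraint $L_gb=0$ decomposes into $m$ scalar equalities, and confirm that the absence of inequality generators leaves exactly one SOS polynomial per $h_i$.
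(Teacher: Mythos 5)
Your proposal is correct and follows essentially the same route as the paper's proof: reduce CBF validity to nonnegativity of $L_fb$ on $K$, apply part 2 of the KS Positivstellensatz to get (\ref{Theorem1}) with $h_1,h_2\in P_S$, and use Remark 1 with $r=0\leq 1$ to replace $P_S$ by $M_S$ and unwind (\ref{MSdef}) into the form (\ref{pivec}). Your bookkeeping is in fact slightly more careful than the paper's, which loosely counts ``$s=2$ equations'' by treating the vector constraint $L_gb(x)=0$ as a single generator, whereas you correctly decompose it into $m$ scalar equalities (so $s=m+1$) whose contributions are gathered by the inner product $L_gb\cdot p_i$.
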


\begin{proof}
In order to verify safety, the candidate CBF $b$ must satisfy (\ref{CBFineq}). As introduced in Section 2.1, (\ref{CBFineq}) is satisfied over the state $x\in\mathbb{R}^n$ if and only if $L_fb(x)\geq0$ over the set $K$.

Applying the KS Positivstellensatz in Proposition 2, we have $L_fb(x)\geq0$ over $K$ if and only if there exists $a\in\mathbb{N}_{\geq0}$ and $h_1,h_2\in P_S$ such that (\ref{Theorem1}) is satisfied. Here, we have defined the set $S$ with $r=0$ inequalities and $s=2$ equations. As noted in Remark 1, since $r\leq1$ we have $P_S=M_S$. Since $h_1,h_2\in M_S$ they must take the form of (\ref{MSdef}), which leads to (\ref{pivec}). Then $L_fb(x)\geq0$ over $K$ if and only if
\begin{equation}\label{Proof1a}
    h_1(x)L_fb(x) - h_2(x) -(L_fb(x))^{2a} = 0
\end{equation}
for the $h_1,h_2\in M_S$ and $a\in\mathbb{N}_{\geq0}$ above.
\end{proof}

This theorem uses the same motivation as that in \cite{Clark_Verification} to prove forward invariance. However, we  apply an alternate variation of the Positivstellensatz to form a verification statement that is well-suited to numerical implementation via linear programming and for scaling the number of candidate CBFs. This difference arises in considering the nonnegativity of $L_fb$ over the set $K$, as opposed to testing set emptiness of an augmented $S$. The benefit of this alternate approach materializes in the next theorem. 

A natural extension is safety verification for systems with $L$ candidate CBFs. Each CBF $b_i$ corresponds to a basic semialgebraic set $K_i = \{x\in\mathbb{R}^n \mid b_i(x)=0,L_gb_i(x)=0\}$ and an independent safe set $\mathcal{B}_i=\{x\in\mathbb{R}^n\mid b_i(x)\geq 0\}$. The system safe set $\mathcal{B}$ is the intersection of all independent safe sets, namely $\mathcal{B}=\cap_{i=1}^L\mathcal{B}_i$.

\begin{theorem}
Given a system defined by (\ref{sysdef}) and a collection of candidate CBFs,  $S=\{b_1,\dots,b_L\}$, where $b_i\in\mathbb{R}[x]$ for $i=1,\dots,L$, assume that $M_S$ satisfies the Archimedean property. Then $b_i\in S$ is a CBF for all $i\in[L]$ and  $\mathcal{B}\not=\emptyset$ if and only if the following conditions are satisfied:
\begin{enumerate}
    \item Each of the $L$ candidate CBFs in $S$ satisfies Theorem 1.
    \item There do not exist SOS polynomials $s_i\in\Sigma[x]$ for $i=0,\dots,L$ such that 
    \begin{equation} \label{MultiC2}
        1+ s_0(x) + \sum_{i=1}^Ls_i(x)b_i(x) = 0.
    \end{equation}
\end{enumerate}
\end{theorem}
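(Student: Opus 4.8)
The plan is to decouple the biconditional into two independent equivalences that align term-by-term with the two conjuncts appearing on each side: (i) that every $b_i$ is a valid CBF is equivalent to Condition 1, and (ii) that $\mathcal{B}\neq\emptyset$ is equivalent to Condition 2. Since the claimed left-hand side is the conjunction ``every $b_i$ is a CBF'' and ``$\mathcal{B}\neq\emptyset$,'' while the right-hand side is the conjunction of Conditions 1 and 2, establishing (i) and (ii) separately proves the theorem.

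For (i), I would simply apply Theorem 1 to each candidate $b_i$ in turn. Theorem 1 already characterizes, for a single CBF, the property of being valid (i.e., rendering its safe set $\mathcal{B}_i$ forward invariant) as exactly the existence of the polynomials and the KS-type identity in its statement. Quantifying over $i\in[L]$, every $b_i$ is a CBF if and only if each one individually satisfies Theorem 1, which is precisely Condition 1. I note this half requires no Archimedean assumption, since Theorem 1 rests on the KS Positivstellensatz (Proposition 2).

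For (ii), the key observation is that the set in this theorem is $S=\{b_1,\dots,b_L\}$ with all entries treated as inequality constraints, so that $\mathcal{B}=\{x\mid b_1(x)\geq0,\dots,b_L(x)\geq0\}$ is exactly the basic semialgebraic set $K$ of (\ref{Kdef}) with $r=L$ and $s=0$. A generic element of the quadratic module $M_S$ in (\ref{MSdef}) then has the form $s_0+\sum_{i=1}^L s_i b_i$ with each $s_i\in\Sigma[x]$. Rearranging (\ref{MultiC2}) as $-1 = s_0+\sum_{i=1}^L s_i b_i$, the existence of such SOS polynomials is precisely the statement $-1\in M_S$. Invoking the assumed Archimedean property of $M_S$, I would apply Jacobi's Representation Corollary (Proposition 4), which gives $-1\in M_S$ if and only if $K=\mathcal{B}=\emptyset$. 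Contrapositively, the nonexistence of such $s_i$, which is Condition 2, holds if and only if $\mathcal{B}\neq\emptyset$.

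The main obstacle is conceptual rather than computational: one must keep the two quadratic-module constructions distinct. The module invoked in Condition 1 (through Theorem 1) is built from the per-CBF set $\{b_i,L_gb_i\}$ with $b_i$ and $L_gb_i$ entering as equality constraints, whereas the module in Condition 2 and in the Archimedean hypothesis is built from $S=\{b_1,\dots,b_L\}$ with the $b_i$ entering as inequality constraints. Recognizing that (\ref{MultiC2}) encodes exactly $-1\in M_S$ for this second module, and that the Archimedean assumption is what licenses Jacobi's corollary, is the crux; once that identification is made, both halves of the equivalence follow directly from the propositions already in hand.
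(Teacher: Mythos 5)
Your proposal is correct and takes essentially the same route as the paper: Condition 1 is handled by applying the biconditional of Theorem 1 to each $b_i$ separately, and Condition 2 by identifying the certificate $1+s_0+\sum_{i=1}^L s_ib_i=0$ with $-1\in M_S$ and invoking Jacobi's Representation Corollary (Proposition 4) under the Archimedean hypothesis. The only minor divergence is in the direction $\mathcal{B}\neq\emptyset\Rightarrow$ Condition 2, where the paper routes through statement 4 of the KS Positivstellensatz together with the inclusion $M_S\subseteq P_S$ while you use the stated ``if and only if'' of Proposition 4 directly --- an equally valid and arguably cleaner bookkeeping, as is your explicit decoupling into two independent equivalences (the paper instead loosely asserts that $\mathcal{B}_i\neq\emptyset$ implies Condition 1 in the reverse direction, whereas you correctly derive Condition 1 from the assumption that each $b_i$ is a CBF).
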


\begin{proof} 
If each of the $L$ candidate CBFs satisfies the conditions of Theorem 1, then all $b_i\in S$ are CBFs individually. This is shown by applying Theorem 1 to each $b_i$. Thus, if Condition $1$ in the theorem statement is satisfied, then for all $i\in[L]$ the CBF $b_i$ implies the existence of an input that renders the system forward invariant to the safe set $\mathcal{B}_i = \{x\in\mathbb{R}^n \mid b_i(x)\geq 0 \}$.

When there are $L>1$ CBFs in a system, there must exist a solution to $L$ copies of (\ref{CBFineq}) simultaneously. That is, we require the existence of an input $u\in\mathbb{R}^m$ that satisfies the $L$ inequalities
\begin{equation}
    \begin{aligned}
    L_gb_1(x)u&\geq \kappa_1(b_1(x)) - L_fb_1(x) \\
    \vdots \\
    L_gb_L(x)u&\geq \kappa_L(b_L(x)) - L_fb_L(x). \\
    \end{aligned}
\end{equation}
Condition 2 in the theorem statement then determines if such a~$u$
exists and thus determines 
if the intersection set $\mathcal{B} = \cap_{i=1}^L\mathcal{B}_i$ is empty. Specifically, if Condition 2 is satisfied, then there is no solution to (\ref{MultiC2}) and the constant polynomial $q=-1$ is not of the form of (\ref{MSdef}). Then, $-1\not\in M_S$. As introduced in Section 2.2, the quadratic module $M_S$ is a $\Sigma[x]$-module and $\Sigma[x]$ is a generating preprime in $\mathbb{R}[x]$. Applying Proposition 4, since $M_S$ is Archimedean, if $-1\not\in M_S$ then $\mathcal{B}\not=\emptyset$. Thus, satisfaction of Conditions 1 and 2 implies $\mathcal{B}\not=\emptyset$.

To prove the other direction, now suppose that $\mathcal{B}\not=\emptyset$. Clearly $\mathcal{B}\not=\emptyset$ implies $\mathcal{B}_i\not=\emptyset$ for all~$i \in [L]$, which implies that Condition 1 holds for all~$i \in [L]$. Next, by applying the KS Positivstellensatz defined in Proposition 2, $-1\in P_S$ if and only if $\mathcal{B}=\emptyset$. Thus, $\mathcal{B}\not=\emptyset$ implies $-1\not\in P_S$. From Remark 1, we have $M_S\subseteq P_S$ and thus $-1\not\in P_S$ implies $-1\not\in M_S$. For $S=\{b_1,\dots,b_L\}$, a polynomial $q\in M_S$ if and only if $q$ satisfies (\ref{MSdef}). The constant polynomial $q=-1\not\in M_S$, therefore there cannot exist SOS polynomials $s_0,s_1,\dots,s_L$ such that (\ref{MultiC2}) is satisfied. Thus $\mathcal{B}\not=\emptyset$ implies that Conditions 1 and 2 hold, and the proof is complete.
\end{proof}

\section{Single CBF Linear Program}
In this section we solve Problem 2. We address the translation of the verification statement derived in Theorem 1 for a single candidate CBF to a linear program. 
The SDPs/LPs created from translating the Positivstellens\"atze statements can involve both inequality and equality constraints containing the SOS/DSOS polynomial decision variables. Inequality constraints of the form $e(x)\leq0$ are taken to be $-e(x)\in\mathcal{D}[x]$. This creates constraints of the form $-e(x)=m^T(x)G_k(e)m(x)$ and $G_k(e)\succeq_{dd}0$. Let $d$ be the degree of $e$, then $k=\lceil d/2\rceil$.

There are a few methods for handling the equality constraints of the form $e(x)=0$. One method is to constrain both $e(x)\in\mathcal{D}[x]$ and $-e(x)\in\mathcal{D}[x]$ \cite{Ahmadi_DSOSAndSDSOSOptimization}. However, this can cause numerical difficulties. Alternatively, $e(x)=m^T(x)G_k(e)m(x)$ with $G_k(e)=0$ can be used to impose equality. Here, $G_k(e)=0$ is interpreted element-wise. The following theorem uses this fact and ideas from \cite{Ahmadi_DSOSAndSDSOSOptimization} to create a linear program.

\begin{theorem}
Given a system (\ref{sysdef}) and a single candidate CBF $b\in\mathbb{R}[x]$, let $a\geq0$ be an integer, $p_{10},p_{20}\in\mathbb{R}[x]$ and $p_1,p_2\in\mathbb{R}^m[x]$ be polynomials, and $s_1,s_2\in\mathcal{D}[x]$ be DSOS polynomials. Let $G_k$ be as defined by (\ref{gram}) with $k=\lceil d/2 \rceil$, where $d$ is the degree of its polynomial argument. If the following LP has a solution, then $b$ is a CBF:
\begin{equation}\label{Theorem3}
\begin{aligned}
    \min_{z} \quad &0 \\
    \text{s.t.} \quad  
    &Q_i \succeq_{dd}0 \quad \quad \text{for } i=1,2\\
    &G_k((s_1 + bp_{10} + L_gbp_1)L_fb - s_2 \\ &\quad \quad - bp_{20} - L_gbp_{2} - (L_fb)^{2a})=0,
\end{aligned}
\end{equation}
where the decision variable is
\begin{equation}
\begin{aligned}
    z = [c_{10}^T \;\;  c_{20}^T \;\; &\text{col}(c_{1})^T \;\; \text{col}(c_{2})^T \;\; \text{uTri}(Q_1)^T \;\; \\ &\text{uTri}(\tau_1)^T \;\; \text{uTri}(Q_2)^T \;\; \text{uTri}(\tau_2)^T ]^T,
\end{aligned}
\end{equation}
with $z\in\mathbb{R}^{2k^2+(2m+4)k}$, the DSOS variables take the form $s_i(x) = m^T(x)Q_im(x)$, the polynomial scalar variables take the form $p_{i0}(x) = c_{i0}^Tm(x)$, and the polynomial vector variables take the form $p_{i}(x)=c_{i}^Tm(x)$ for $i=1,2$.
\end{theorem}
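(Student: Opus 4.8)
The plan is to show that any feasible point of the linear program (\ref{Theorem3}) furnishes exactly the objects required by Theorem 1, so that feasibility immediately certifies that $b$ is a CBF. This is only the sufficient direction: we have replaced the SOS cone $\Sigma[x]$ appearing in Theorem 1 with its inner approximation $\mathcal{D}[x]$, so feasibility of the more restrictive problem produces a valid SOS certificate, but infeasibility would not rule out an SOS certificate lying outside $\mathcal{D}[x]$. Accordingly there is no ``only if'' claim here, and the argument runs in one direction.

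First I would fix the integer $a \geq 0$ (so that $(L_f b)^{2a}$ is a fixed polynomial) and suppose the LP returns a decision vector $z \in \mathbb{R}^{2k^2 + (2m+4)k}$. From the blocks of $z$ I read off the Gram matrices $Q_1, Q_2$, the coefficient vectors $c_{10}, c_{20}$, and the coefficient matrices $c_1, c_2$, defining $s_i(x) = m^T(x) Q_i m(x)$, $p_{i0}(x) = c_{i0}^T m(x)$, and $p_i(x) = c_i^T m(x)$ for $i = 1, 2$. The constraint $Q_i \succeq_{dd} 0$, enforced through the auxiliary matrices $\tau_i$ and the linear inequalities (\ref{DDconst}), together with Proposition 5, guarantees $s_i \in \mathcal{D}[x]$; the inclusion $\mathcal{D}[x] \subseteq \Sigma[x]$ then upgrades this to $s_i \in \Sigma[x]$, as Theorem 1 requires.

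Next I would convert the Gram equality back into a polynomial identity. By the defining property of $G_k$, every polynomial $e$ obeys $e(x) = m^T(x) G_k(e) m(x)$, so the element-wise condition $G_k(e) = 0$ forces $e \equiv 0$. Applying this with
\[
e = (s_1 + b p_{10} + L_g b\, p_1) L_f b - s_2 - b p_{20} - L_g b\, p_2 - (L_f b)^{2a},
\]
the feasibility constraint reproduces (\ref{Theorem1}), with $h_1 = s_1 + b p_{10} + L_g b\, p_1$ and $h_2 = s_2 + b p_{20} + L_g b\, p_2$, each of the form (\ref{pivec}) and hence in $M_S$. Having produced $a \in \mathbb{N}_{\geq 0}$, the SOS polynomials $s_1, s_2$, and the polynomials $p_{10}, p_{20}, p_1, p_2$ satisfying every hypothesis of Theorem 1, I conclude via Theorem 1 that $b$ is a CBF. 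The stated dimension follows by counting: $2k$ coefficients in $c_{10}, c_{20}$, then $2mk$ entries in $\text{col}(c_1), \text{col}(c_2)$, and $2(k^2 + k)$ entries across the four $\text{uTri}$ blocks of $Q_1, Q_2, \tau_1, \tau_2$.

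The step I expect to demand the most care is confirming that (\ref{Theorem3}) is genuinely linear in $z$, so that the feasibility problem is an LP in the sense of Proposition 6. With $a$ fixed, each summand of $e$ is a product of a fixed polynomial ($b$, $L_f b$, $L_g b$, or $(L_f b)^{2a}$) with either a polynomial whose coefficients are entries of $z$ or the quadratic form $m^T Q_i m$; expressing all summands in a common monomial basis $m$ of size $k = \lceil d/2 \rceil$, the coefficients of $e$, and thus the entries of $G_k(e)$, are affine in $z$. The one point to watch is the term $s_i L_f b$: although $s_i$ is quadratic in $m$, its polynomial coefficients are \emph{linear} in the entries of $Q_i$, and multiplication by the fixed factor $L_f b$ preserves this, so $G_k(s_i L_f b)$ depends linearly rather than quadratically on $Q_i$. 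Once this is established, $G_k(e) = 0$ is a system of linear equations and $Q_i \succeq_{dd} 0$ is linear via (\ref{DDconst}), completing the verification that the problem is a linear program.
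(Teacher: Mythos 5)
Your proposal is correct and follows essentially the same route as the paper's proof: apply Theorem 1, inner-approximate $\Sigma[x]$ by $\mathcal{D}[x]$ so that $Q_i\succeq_{dd}0$ becomes the linear constraints (\ref{DDconst}), and observe that $G_k(e)=0$ both recovers the polynomial identity (\ref{Theorem1}) and is affine in $z$ since $b$, $L_fb$, and $L_gb$ are fixed. Your write-up is in fact somewhat more careful than the paper's, e.g., explicitly invoking Proposition 5 and the inclusion $\mathcal{D}[x]\subseteq\Sigma[x]$, checking the dimension count, and flagging why the converse direction is unavailable, but these are elaborations of the same argument rather than a different approach.
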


\begin{proof}
Applying Theorem 1, we require the existence of a solution to (\ref{Theorem1}). In Theorem 1 we had $s_1,s_2\in\Sigma[x]$. This creates an optimization problem over the cone of SOS polynomials. Here, to create an LP, we inner-approximate the SOS cone with the DSOS cone, as introduced in Section 2.3. By taking $s_1,s_2\in\mathcal{D}[x]$, each constraint of the form $Q_i\succeq_{dd}0$ is enforced with the linear diagonal dominance constraints from (\ref{DDconst}). Thus, there is no dependence on the indeterminate $x\in\mathbb{R}^n$ when constraining $s_1,s_2\in\mathcal{D}[x]$.

Likewise, the equation constraining $G_k$ in (\ref{Theorem3}) has no dependence on the indeterminate $x\in\mathbb{R}^n$. The Gram matrix operation, $G_k$, removes dependence on the indeterminate while preserving equality with the polynomial argument. Because $b$, $L_fb$, and $L_gb$ have no dependence on $z$, each entry of $G_k$ is trivially a linear expression of the decision variables. Therefore, each constraint in (\ref{Theorem3}) is linear in the decision variable $z$. As well, since the objective function is a constant $0$, the program is linear.

Thus, if there is a solution to the linear program (\ref{Theorem3}), then the conditions of Theorem 1 are satisfied and the system is invariant to safe set $\mathcal{B}$, verifying $b$ as a CBF.
\end{proof}

\section{Multiple CBF Linear Programming}
In this section we solve Problem 3. We consider the translation of the multiple candidate verification statements derived in Theorem 2 to linear programs. First, we establish a lemma that is utilized in the statement transcription process.

\begin{lemma} The cone of DSOS polynomials $\mathcal{D}[x]$ in $\mathbb{R}[x]$ is a generating preprime.
\end{lemma}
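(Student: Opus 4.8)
The plan is to verify directly that $\mathcal{D}[x]$ satisfies the three defining properties of a generating preprime: it contains the nonnegative integers, it is closed under addition and under multiplication, and it satisfies $\mathcal{D}[x]-\mathcal{D}[x]=\mathbb{R}[x]$. I would organize the argument around the two equivalent descriptions of a DSOS polynomial already available: the explicit generator form $s=\sum_i a_i m_i^2+\sum_{i,j}b_{ij}(m_i+m_j)^2+\sum_{i,j}c_{ij}(m_i-m_j)^2$ with nonnegative coefficients, and the Gram-matrix characterization of Proposition 5, $s=m^T(x)Qm(x)$ with $Q\succeq_{dd}0$.

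First I would dispatch the easy axioms. Containment of $\mathbb{N}$ is immediate, since any nonnegative integer $c$ equals $c\,m_1^2$ with $m_1=1$ and coefficient $c\geq 0$, so $\mathbb{N}\subseteq\mathcal{D}[x]$. For closure under addition, given $s_1,s_2\in\mathcal{D}[x]$ I would pad their Gram matrices to a common monomial basis by inserting zero rows and columns, which preserves diagonal dominance; then $Q_1+Q_2$ satisfies (\ref{DD}) because the diagonal-dominance inequalities add termwise and the triangle inequality controls the off-diagonal sums, so $s_1+s_2=m^T(x)(Q_1+Q_2)m(x)\in\mathcal{D}[x]$. The generating property is equally direct: $\mathcal{D}[x]-\mathcal{D}[x]$ is a subgroup of $(\mathbb{R}[x],+)$, so it suffices to place every monomial in it. Writing a monomial $x^\alpha$ as a product $m_i m_j$ of monomials, the polarization identity $m_i m_j=\tfrac14\big[(m_i+m_j)^2-(m_i-m_j)^2\big]$ exhibits it as a difference of two DSOS generators; scaling by the (signed) coefficient and summing over monomials then writes an arbitrary $p\in\mathbb{R}[x]$ as a difference of two DSOS polynomials, giving $\mathcal{D}[x]-\mathcal{D}[x]=\mathbb{R}[x]$.

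The main obstacle is closure under multiplication, $\mathcal{D}[x]\cdot\mathcal{D}[x]\subseteq\mathcal{D}[x]$. Here I would first reduce to products of generators by distributing over the nonnegative combinations, so that it suffices to show the product of any two generators is DSOS. Two of the three cases are benign: $m_i^2\cdot m_k^2=(m_i m_k)^2$ is a monomial square, and $m_i^2\cdot(m_k\pm m_l)^2=(m_i m_k\pm m_i m_l)^2$ is again a binomial-square generator. The difficult case is the product of two binomial squares, $(m_i\pm m_j)^2(m_k\pm m_l)^2=\big((m_i\pm m_j)(m_k\pm m_l)\big)^2$, which is the square of a four-term polynomial.

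I expect this last case to be precisely where the proof is delicate, and I would devote most of the effort to it. Such a square need not admit a diagonally dominant Gram matrix: a Newton-polytope argument can pin down the only admissible monomial basis, after which the off-diagonal coefficients forced by matching terms can violate (\ref{DD}) in some row. My plan would therefore be to settle this step carefully, either by proving closure under an additional structural restriction on the monomials appearing in the generators, or, should full multiplicative closure fail, by isolating the exact weaker notion of preprime that still suffices to invoke Jacobi's corollary (Proposition 4) for the multiple-CBF emptiness certificate of Theorem 2. Everything else in the argument is routine; the multiplicative closure of $\mathcal{D}[x]$ is the crux on which the lemma stands or falls.
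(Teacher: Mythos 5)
Your handling of three of the four axioms is correct and essentially forced: $\mathbb{N}\subseteq\mathcal{D}[x]$ via $c\cdot 1^2$, additive closure by padding Gram matrices to a common monomial basis and summing the inequalities in (\ref{DD}), and the generating property $\mathcal{D}[x]-\mathcal{D}[x]=\mathbb{R}[x]$ via polarization. The substantive point is your refusal to wave through multiplicative closure, and your suspicion is vindicated: closure under products genuinely fails, in exactly the binomial-times-binomial case you isolated. Concretely, $s_1=(1+x)^2$ and $s_2=(1+y)^2$ are DSOS generators, but $s_1s_2=(1+x+y+xy)^2$ is not DSOS. In any representation $m^T(x)Qm(x)$ over a monomial basis, the constant coefficient of $s_1s_2$ forces the monomial $1$ into the basis with $Q_{1,1}=1$ (no other product of two monomials is constant), and the coefficients of $x$ and of $y$ force $Q_{1,x}=Q_{1,y}=1$, since $1\cdot x$ and $1\cdot y$ are the only monomial factorizations of $x$ and $y$; the row of the constant monomial in (\ref{DD}) then demands $1\geq |Q_{1,x}|+|Q_{1,y}|=2$, a contradiction for every admissible $Q$. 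This is precisely the forced-Gram/Newton-polytope obstruction you anticipated.

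The comparison with the paper is therefore unusual. The paper's proof disposes of the lemma by asserting that $\mathcal{D}[x]$ is closed under addition and multiplication, appealing to the definition of a cone in real algebraic geometry; but in that usage a cone is a preordering, which by definition is multiplicatively closed and contains all squares, and $\mathcal{D}[x]$ satisfies neither (a similar forced-Gram argument shows $(x+y+z)^2\notin\mathcal{D}[x]$, so even the squares property fails). Under the cited definition of a preprime, which requires $T\cdot T\subseteq T$, the counterexample above shows the lemma is false as stated: neither your plan nor the paper's proof can establish it, but yours fails honestly where the paper's fails silently. Your proposed fallback is the right direction, with one caveat: the completeness direction of Jacobi's corollary (Proposition 4) --- which is the direction Theorem 4 actually invokes, since $M_{S,D}\subseteq M_S$ means infeasibility of the DSOS LP (\ref{Theorem4}) does not follow from the SOS certificate --- makes real use of the preprime's multiplicative structure, so one cannot simply weaken the definition and re-cite the theorem. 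Viable repairs are to replace Jacobi with an LP-compatible Positivstellensatz of Krivine--Handelman type (certificates built from products of the $b_i$ with nonnegative-coefficient multipliers), or to retreat to the sound one-sided reading: a feasible point of (\ref{Theorem4}) certifies $\mathcal{B}=\emptyset$ because $\mathcal{D}[x]\subseteq\Sigma[x]$, while infeasibility of the LP alone certifies nothing.
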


\begin{proof} For any cone $\mathcal{C}\subseteq\mathbb{R}[x]$, $\mathcal{C}$ is closed under addition, closed under multiplication, and $a^2\in\mathcal{C}$ if $a\in \mathbb{R}[x]$ \cite{Bochnak_RealAlgebraicGeometry}. The cone $\mathcal{D}[x]\subseteq\mathbb{R}[x]$ is closed under addition and multiplication, and $1\in\mathcal{D}[x]$ since $1\in\mathbb{R}[x]$. By definition, $\mathcal{D}[x]$ is a full-dimensional cone, i.e., $\mathcal{D}[x]-\mathcal{D}[x]=\mathbb{R}[x]$ \cite{Ahmadi_DSOSAndSDSOSOptimization}. Thus, $\mathcal{D}[x]$ is a generating preprime in $\mathbb{R}[x]$ \cite{Powers_CertificatesOfPositivity}.
\end{proof}

We define $M_{S,D}$ in an analogous manner to $M_S$ defined in (\ref{MSdef}), but with the SOS polynomials replaced by DSOS polynomials. Then a polynomial $q\in M_{S,D}$ if and only if $q$ is of the form
\begin{equation}\label{Dx-Module}
    q= s_0(x) + \sum_{i=1}^L s_i(x)b_i(x),
\end{equation}
where $s_i\in \mathcal{D}[x]$ for $i=0,\dots,L$. Thus, $M_{S,D}$ is a $\mathcal{D}[x]$-module.

In the following theorem statement, we assume that $M_{S,D}$ is Archimedean. The Archimedean property requires that the diameter of the set $\mathcal{B}$ is bounded by a natural number, as defined in Proposition 1. When the assumption is not already satisfied by the polynomials in $\mathcal{B}$, it is simple to enforce. By augmenting $S$ with the function $C-\sum_{i=1}^nx_i^2\geq 0$, with $C\in\mathbb{N}\backslash\{0\}$ being a bound on the diameter of $\mathcal{B}$, the quadratic module with respect to the augmented set $S'=S\cap \{x\in\mathbb{R}^n\mid C-\sum_{i=1}^nx_i^2\geq 0\}$ is Archimedean.

\begin{theorem}
Given a system (\ref{sysdef}) and a collection of candidate CBFs, $S=\{b_1,\dots,b_L\}$, assume that $M_{S,D}$ satisfies the Archimedean property. Let $G_k$ be as defined by (\ref{gram}) with $k=\lceil d/2 \rceil$, where $d$ is the degree of its polynomial argument. If the following conditions are satisfied, then for all $i\in[L]$ $b_i\in S$ is a CBF and $\mathcal{B}\not=\emptyset$:
\begin{enumerate}
    \item Each of the $L$ candidate CBFs must independently produce solutions to the LP (\ref{Theorem3}), satisfying  the conditions of Theorem 3.
    \item Let $s_i\in\mathcal{D}[x]$ for every $i\in[L]$. The following LP must not have a solution:
    \begin{equation}\label{Theorem4}
        \begin{aligned}
        \min_z \quad &0 \\
        \text{s.t.} \quad 
        &Q_i\succeq_{dd}0 \quad \quad \text{for all } i = 0,\dots,L \\
        &G_k(1 + s_0 + \sum_{i=1}^L s_ib_i) =0,
        \end{aligned}
    \end{equation}
    where the decision variable is
    \begin{equation}
    \begin{aligned}
        z = [ \text{uTri}(Q_0)^T \;\; \text{uTri}&(\tau_0)^T \;\; \dots \\ \;\; &\text{uTri}(Q_L)^T \;\; \text{uTri}(\tau_L)^T]^T,
    \end{aligned}
    \end{equation}
    with $z\in\mathbb{R}^{(k^2+k)(L+1)}$, and where the DSOS variables are defined as $s_i(x)=m^T(x)Q_im(x)$ for $i=[L]\cup\{0\}$. The variables $\tau_i$ for $i=[L]\cup\{0\}$ are the symmetric bounding matrices used to linearize the absolute value constraints from (\ref{DDconst}).
\end{enumerate}
\end{theorem}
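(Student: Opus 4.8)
The plan is to verify the two conclusions — that each $b_i$ is a CBF and that $\mathcal{B}\neq\emptyset$ — by addressing the two hypotheses in turn, paralleling Theorem 2 but carrying out the set-emptiness argument inside the $\mathcal{D}[x]$-module $M_{S,D}$ rather than its SOS analogue $M_S$. Because the statement asserts only sufficiency (the ``if'' direction), no converse is required; this matters, since $\mathcal{D}[x]$ is a strict inner approximation of $\Sigma[x]$ and the absence of a DSOS certificate does not in general imply the absence of an SOS certificate, so one cannot simply appeal to Theorem 2.

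Condition 1 is immediate: applying Theorem 3 to each candidate $b_i$ separately, the existence of a solution to the LP (\ref{Theorem3}) for that $b_i$ certifies $b_i$ as a CBF, and ranging over $i\in[L]$ shows every $b_i\in S$ is a CBF.

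The core of the argument is Condition 2, and I would first show that the LP (\ref{Theorem4}) is feasible if and only if $-1\in M_{S,D}$. The Gram-matrix equality $G_k(1+s_0+\sum_{i=1}^L s_ib_i)=0$ forces the polynomial $1+s_0+\sum_i s_ib_i$ to vanish identically — the operation $G_k$ strips the indeterminate while preserving equality, exactly as in the proof of Theorem 3 — and the constraints $Q_i\succeq_{dd}0$, linearized through (\ref{DDconst}) and Proposition 5, enforce $s_i\in\mathcal{D}[x]$ for each $i$. A feasible point of (\ref{Theorem4}) is therefore exactly a representation $-1=s_0+\sum_i s_ib_i$ with $s_i\in\mathcal{D}[x]$, i.e. a witness of $-1\in M_{S,D}$ in the sense of (\ref{Dx-Module}). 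Hence the hypothesis that (\ref{Theorem4}) has no solution is the statement $-1\notin M_{S,D}$.

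It then remains to pass from $-1\notin M_{S,D}$ to $\mathcal{B}\neq\emptyset$, and for this I would invoke Jacobi's Representation Corollary (Proposition 4) with the preprime taken to be $\mathcal{D}[x]$. Lemma 1 supplies precisely the hypothesis that $\mathcal{D}[x]$ is a generating preprime; $M_{S,D}$ is a $\mathcal{D}[x]$-module by construction and is Archimedean by assumption; and $\mathcal{B}$ is the nonnegativity set of $S$. Proposition 4 then yields $-1\in M_{S,D}$ if and only if $\mathcal{B}=\emptyset$, so $-1\notin M_{S,D}$ gives $\mathcal{B}\neq\emptyset$, completing the proof when combined with Condition 1. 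I expect this last step to carry the most weight: applying Jacobi's corollary directly to the DSOS module is what lets the argument bypass the SOS machinery of Theorem 2 altogether, and it is the reason Lemma 1 was established — without the generating-preprime property of $\mathcal{D}[x]$ the corollary would not apply to $M_{S,D}$, and one could not conclude non-emptiness of $\mathcal{B}$ from the mere infeasibility of a DSOS certificate. Care should also be taken that the degree $k$ used in (\ref{Theorem4}) is consistent with the representation degree of $M_{S,D}$, so that LP infeasibility faithfully reflects $-1\notin M_{S,D}$.
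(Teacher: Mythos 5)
Your proposal follows essentially the same route as the paper's own proof: Condition 1 is dispatched by applying Theorem 3 to each $b_i$, the LP (\ref{Theorem4}) is identified (via the Gram operation $G_k$ and the diagonal-dominance constraints) with the existence of a representation $-1 = s_0 + \sum_{i=1}^L s_i b_i$ with $s_i \in \mathcal{D}[x]$, and nonemptiness of $\mathcal{B}$ then follows by applying Proposition 4 directly to the Archimedean $\mathcal{D}[x]$-module $M_{S,D}$, with Lemma 1 supplying the generating-preprime hypothesis. Your two explicit cautions --- that only sufficiency is claimed (so one cannot simply invoke Theorem 2, since DSOS infeasibility does not rule out an SOS certificate), and that the fixed degree $k$ in the LP means infeasibility only certifies the absence of a degree-bounded representation of $-1$ in $M_{S,D}$ --- are sound and in fact make explicit two points the paper's proof passes over silently.
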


\begin{proof} 
Applying Theorem 2, each CBF candidate must independently satisfy (\ref{Theorem1}) to be verified as a CBF. 
In Theorem 2 we had $s_i(x)\in\Sigma[x]$ for $i=0,\dots,L$. Here, we inner-approximate the SOS cone with the DSOS cone. As explained in the proof of Theorem 3, using $s_i\in\mathcal{D}[x]$ for $i=0,\dots,L$ transforms (\ref{MultiC2}) from an SDP to an LP due to the diagonal dominance constraints from (\ref{DDconst}) being linear inequalities. Similarly, the equations constraining $G_k$ in (\ref{Theorem4}) have no dependence on the indeterminate $x\in\mathbb{R}^n$. Since each $b_i$ has no dependence on $z$, the polynomial argument of $G_k$ is a linear expression of the decision variables. Thus, (\ref{Theorem4}) is a linear program.
 
From Theorem 2, we require that a solution does not exist to (\ref{MultiC2}) to ensure that $-1\not\in M_S$. As established in Lemma 1, $\mathcal{D}[x]$ is a generating preprime in $\mathbb{R}[x]$ and $M_{S,D}$ is a $\mathcal{D}[x]$-module. Then $M_{S,D}$ is an Archimedean $\mathcal{D}[x]$-module due to the Archimedean assumption imposed in the theorem statement. By Proposition 4, $-1\not\in M_{S,D}$ implies $\mathcal{B}\not=\emptyset$. Then the lack of solution to the LP in (\ref{Theorem4}) implies $\mathcal{B}\not=\emptyset$. 
\end{proof}

The safety verification program scales linearly with the number of candidate CBFs,
with $L$ candidate CBFs there are $L+1$ linear programs. In total, there are $3L+1$ DSOS polynomial decision variables across the $L+1$ LPs that need to be solved in Theorem 4. We note that the multiple CBF verification problem can be solved using the preorder instead of the quadratic module. However, doing this leads to an exponential growth rate in the number of DSOS polynomial decision variables. The Archimedean assumption allows for the use of the quadratic module and the improved computational transcription that results. 

\section{Simulations}

The proposed method for multiple CBF verification was simulated in MATLAB, and the SPOT toolbox developed in \cite{Ahmadi_DSOSAndSDSOSOptimization} was used for both SOS and DSOS
problems.\footnote{Code available at: https://github.com/elliempond/

SatelliteInspectionCBF.git} SPOT is a parser that transforms SOS problems to semidefinite programs and DSOS programs to linear programs, and we use it for both to provide a fair comparison
between the amounts of computation time they require. 
We used the SDP solver SeDuMi for both the SOS and DSOS programs \cite{Sedumi}. The simulations were computed on a workstation running Ubuntu 20.04 with a 3.30 GHz Intel Core processor and 128 GB of RAM.

We consider a satellite inspection problem with~$L$ chaser satellites indexed over $[L]$ and one target satellite in a circular orbit, depicted in
Figure~\ref{fig:satellite}. The objective of this problem is for each of the $L$ chaser satellites to approach the target satellite with a certain relative position. We use the linearized Clohessy-Wiltshire equations of motion of relative position and velocity for each of the $i\in[L]$ chaser satellites given by \cite{Curtis_OrbitalMech}
\begin{equation}
    \begin{aligned}
    \ddot{x}_i &= 2n\dot{y}_i+ 3n^2 + \frac{F_{x,i}}{m_i} \\
    \ddot{y}_i &= -2n\dot{x}_i + \frac{F_{y,i}}{m_i} \\
    \ddot{z}_i &= -n^2z_i + \frac{F_{z,i}}{m_i}, 
    \end{aligned}
\end{equation}
with state $[x_i \;\; y_i \;\; z_i \;\; \dot{x}_i \;\; \dot{y}_i \;\; \dot{z}_i]^T$, input $[F_{x,i}\;\;F_{y,i}\;\;F_{z,i}]^T$, mass of the chaser satellite $m_i$, and mean motion $n$. The mean motion is the average angular speed of a satellite during one full orbit. We let $r_i=[x_i \;\; y_i \;\; z_i]^T$ be the relative position vector between chaser $i$ and the target consisting of the radial, tangential, and out-of-plane relative distances, respectively. This admits dynamics in the form of (\ref{sysdef}). For each $i\in[L]$, we form barrier functions between chaser $i$ and the target satellite as 
\begin{equation}\label{simb}
    b_i(z) = r_i^Tr_i  +\frac{m_i}{T_{i}}\dot{r}_i^T\dot{r}_i - R_t^2,
\end{equation}
where $R_t$ is the minimum safe radius about the target satellite and $T_i$ is the nominal thrust of each chaser. 

\begin{figure}
\begin{center}
\includegraphics[width=8.4cm]{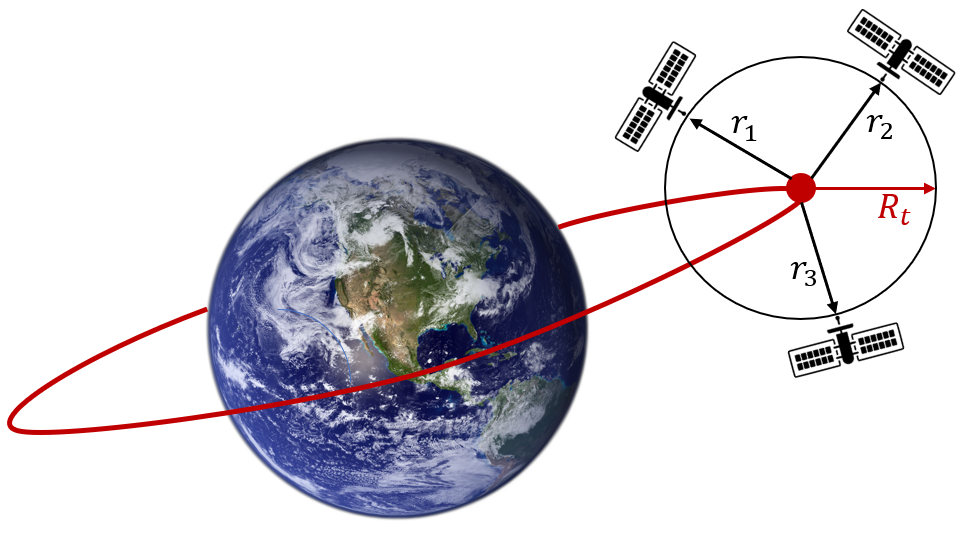}    
\caption{The satellite inspection problem depicted with $L=3$ chaser satellites. The target satellite is represented by the red dot in a circular orbit.} 
\label{fig:satellite}
\end{center}
\end{figure}

Following Theorem 4, this CBF verification 
problem consists of $L$ linear programs in the form of (\ref{Theorem3}). These LPs verify that each candidate of the form of (\ref{simb}) is a CBF. Additionally, we have one linear program in the form of (\ref{Theorem4}) that verifies that $\mathcal{B}=\{z\in\mathbb{R}^{6L}\mid b_1(z)\geq 0,\dots, b_L(z)\geq0\} \not=\emptyset$. We consider 3U cubesat specifications for the chaser satellites with a mass of $m_{i}=2$ kg, nominal thrust $T_i= 0.5$ N for all $i\in[L]$, and the mean motion for a circular low earth orbit $n=0.0010$ rad/s. We consider close approach and proximity operations at $R_t=0.5$ km.

\begin{table}[h]
\begin{center}
\caption{Computation time.}\label{tb:table1}
\begin{tabular}{cccc}
$L$ & DSOS (s) & SOS (s) & \% Reduction \\\hline
$1$ & $1.0$ & $1.7$ & $41.2\%$ \\
$2$ & $2.1$ & $8.4$ & $75.0\%$ \\
$3$ & $4.3$ & $29.5$ & $85.4\%$ \\
$4$ & $10.2$ & $114.8$ & $91.1\%$ \\
$5$ & $25.0$ & $499.0$ & $95.0\%$ \\
$6$ & $63.0$ & $1215.4$ & $94.8\%$ \\  
$7$ & $156.5$ & $3423.7$ &  $95.4\%$\\
\hline 
\end{tabular}
\end{center}
\end{table}

Table 1 gives the computation time for up to $7$ chaser satellites. For $L=1$ satellite, only Theorem~3 was required for verification. There is little improvement between the DSOS linear program and the SOS semidefinite program for one satellite. However, the improvement is immediate for $L\geq2$. The DSOS programs required significantly less time to obtain the same affirmative verification as the corresponding SOS program. While there can be a loss of accuracy when inner-approximating 
SOS programs 
with DSOS LPs, it was not seen with this example problem in the sense that the DSOS program produced an affirmative verification for all scenarios that were considered,
and this verification is in agreement with the SOS program verification. 

These results show that the transcription of CBF verification to linear programs
provides a substantial reduction in computation time while providing
the same feasibility assurances as the more complex SOS verification formulation. 
In fact, Table~1 shows that for $L \geq 4$ chaser satellites the LP formulation
reduces computation time by more than~$90$\%, which is a significant reduction. 
These results
suggest that the LP formulation can not only reduce the computations
required for CBF verification, but also provide new capabilities to verify
larger problems.

\section{Conclusion}
We have presented, to the best of the authors' knowledge, the first verification procedure for
the validation of candidate control barrier functions using linear programming. We have also contributed a method of multiple CBF verification that scales linearly with the number of candidate functions. In future work, these verification methods will be extended to high-order CBFs, dynamic systems with input constraints, and distributed linear CBF certification in multi-agent systems.


\bibliography{ifacconf}             
                                                   







\end{document}